\date{}
\newtheorem{theorem}{Theorem}
\newtheorem{lemma}[theorem]{Lemma}
\newcommand{\stream}{{\sigma}}
\newcommand{\length}{m}
\newcommand{\uni}{\mathcal{U}}
\newcommand{\cV}{\mathcal{V}}
\newcommand{\cP}{\mathcal{P}}
\newcommand{\eps}{\epsilon}
\newcommand{\eat}[1]{}
\begin{document}

\title{Stream Verification}

\author{Justin Thaler\thanks{Yahoo Labs}}

\maketitle 

\begin{abstract}
We survey models and algorithms for stream verification. 
\end{abstract}

\section{Problem Definition}

Stream verification is concerned with the following setting. 
A computationally limited client wants to compute some property of a massive input, but lacks the resources to store even a small fraction of the input, and hence cannot perform the desired computation locally. The client therefore accesses a powerful but untrusted service provider (e.g., a commercial cloud computing service), who not only performs the requested computation, but also proves that the answer is correct.
An array of closely related models have been introduced to capture this scenario. The following section provides a unified presentation of these models, emphasizing their common features before delineating their differences. 

\medskip
\noindent \textbf{Stream Verification Model.}
Let $\stream = \langle a_1, a_2, \dots , a_\length \rangle$ be a data stream, where each $a_i$ comes from a data universe $\uni$ of size $n$, and let $F$ be a function mapping data streams to a finite range $\mathcal{R}$. 
A stream verification protocol for $F$ involves two parties: a \emph{prover} $\cP$, and a (randomized) \emph{verifier} $\cV$. The protocol consists of two stages:  a stream observation stage and a proof verification stage. 
 
In the stream observation stage, $\cV$ processes the stream $\stream$, subject to the 
standard constraints of the data-stream model, i.e., sequential access to $\stream$ and limited memory. 
In the proof verification stage, $\cV$ and $\cP$ exchange a sequence of one or more messages, and afterward $\cV$ outputs a value $b$. $\cV$ is allowed to output a special symbol $\perp$ indicating a rejection of $\cP$'s claims.  
Formally, $\cV$ constitutes a stream verification protocol if the following two properties are satisfied.

\begin{itemize}
\item \textbf{Completeness}: There is some prover strategy $\cP$ such that, for all streams $\stream$, the probability that $\cV$ outputs $F(\stream)$ after interacting with $\cP$ is at least $2/3$. 
\item \textbf{Soundness}: For all streams $\stream$ and all prover strategies $\cP$, the probability that $\cV$ outputs a value not in $\{F(x), \perp\}$ after interacting with $\cP$ is at most $\eps \leq 1/3$.
\end{itemize}

Here, the probabilities are taken over $\cV$'s internal randomness. The constants $2/3$ and $1/3$ are not essential and are chosen by convention. The parameter
$\eps$ is referred to as the \emph{soundness error} of the protocol.

\medskip
\noindent \textbf{Costs.}
There are five primary costs in any stream verification protocol: (1) $\cV$'s space usage, (2) the total communication cost, (3) $\cV$'s runtime, (4) $\cP$'s runtime, and (5) the number of messages exchanged. 

\medskip
\noindent \textbf{Differences Between Models.}
There are three primary differences between the various models of stream verification that have been put forth in the literature. The first is whether the soundness condition is required to hold against all cheating provers (such protocols are called \emph{information-theoretically} or \emph{statistically} sound), or only against cheating provers that run in polynomial time (such protocols are called \emph{computationally} sound). The second is the amount and format of the interaction allowed between $\cP$ and $\cV$.  The third is the temporal relationship between the stream observation and proof verification stage --- in particular, several models permit $\cP$ and $\cV$ to exchange messages before and during the stream observation stage, and sometimes permit the prover's messages to depend on parts of the data stream that $\cV$ has not yet seen. 
In general, more permissive models allow a larger class of problems to be solved efficiently, but may yield protocols that are less realistic.

\medskip
\noindent \textbf{Summary of Models.}
The \emph{annotated data streaming} (ADS) model, introduced by Chakrabarti et al. \cite{icalp}, is non-interactive: $\cP$ is permitted to send a single message to $\cV$, with no communication allowed in the reverse direction. Technically, this model permits the contents of $\cP$'s message to be interleaved with the stream, in which case each bit of $\cP$'s message may be viewed as an ``annotation'' associated with a particular stream update. However, for most ADS protocols that have developed, $\cP$'s message does not need to be interleaved with the stream.
Chakrabarti et al. \cite{icalp} distinguish between two kinds of ADS protocols: \emph{prescient} protocols, in which the annotation sent at any given time can depend on parts of the data stream that $\cV$ has not yet seen, and \emph{online} protocols, which disallow this kind of dependence.

\emph{Streaming interactive proofs} (SIPs) extend the ADS model to allow the prover and verifier to exchange many messages. This model was introduced by Cormode et al. \cite{vldb}.

The \emph{Arthur-Merlin streaming model} was introduced by Gur and Raz \cite{gur}: this model is equivalent to a restricted class of SIPs, in which $\cV$ is only allowed to send a single message to $\cP$ (which must consist entirely of random coin tosses, in analogy with the classical complexity class $\textsc{AM}$), before receiving $\cP$'s reply. 

The model of \emph{streaming delegation} was introduced by Chung et al. \cite{chung}, and corresponds to SIPs that only satisfy computational, rather than information-theoretic, soundness.

\eat{
\begin{table}
\centering
\begin{tabular}{|c|c|c|c|}$\cP$
\hline
Model Name & Soundness Type & Number of & Message \\
 & Type & Messages Allowed & Format Restrictions\\
\hline
Annotated Data Streams \cite{icalp} & Information-Theoretic & 1 & None\\
\hline
Streaming Interactive Proofs \cite{vldb} & Information-Theoretic & Arbitrary & None\\
\hline
Arthur-Merlin Streaming \cite{gur} & Information-Theoretic & 2 & $\cV$'s message to $\cP$\\
& & &  must consist only of coin tosses\\
\hline
Streaming Delegation \cite{chung} & Computational & 2 & None\\
\hline
\end{tabular}
\caption{Model Summary.}
\end{table}
}

\section{Key Results}
Obtaining exact answers even for basic problems in the
standard data streaming model is impossible using $o(n)$ space. In contrast, stream verification
protocols with $o(n)$ space and communication costs have been developed for
(exactly solving) a wide variety of problems. Many of these protocols have adapted powerful algebraic techniques 
originally developed in the classical literature on interactive proofs, particularly the \emph{sum-check} protocol of Lund et al. \cite{lfkn}.
All of the protocols described in Sections \ref{sec:ads}-\ref{sec:cs} apply even to streams in the strict turnstile update model, where universe items can be deleted as well as inserted. The protocols in Section \ref{sec:perupdate} do not support deletions. Unless stated otherwise, all protocols described in this survey are online, i.e., the honest prover's message at any given time does not depend on parts of the data stream that $\cV$ has not yet seen.

\subsection{Annotated Data Streams}
\label{sec:ads}
Chakrabarti et al. \cite{icalp} showed that prescient ADS protocols can be exponentially more powerful than online ones for some problems. 
For example, there is a prescient ADS protocol with logarithmic space and communication costs for computing the median of a sequence of numbers: $\cP$ sends $\cV$ the claimed median $\tau$ at the start of the stream, and while observing the stream, $\cV$ checks that $|\{j:a_j < \tau\}| \leq \length/2$, and $|\{j:a_j > \tau\}| \leq \length/2$, which can be done using an $O(\log m)$-bit counter. Meanwhile, Chakrabarti et al. \cite{icalp} proved that any online protocol for $\textsc{Median}$ with communication
cost $h$ and space cost $v$ requires $h \cdot v = \Omega(n)$, and gave an online ADS protocol achieving these communication--space tradeoffs up to logarithmic factors.

Chakrabarti et al. \cite{icalp} also gave online ADS protocols achieving identical tradeoffs between space and communication costs for problems including \textsc{Frequency Moments} and \textsc{Frequent Items}, and used a lower bound  due to Klauck \cite{Klauck03} on the \emph{Merlin-Arthur communication complexity} of the \textsc{Set-Disjointness} function to show that these tradeoffs are optimal for these problems, even among prescient protocols. 
Subsequent work \cite{esa, semistreaming} gave similarly optimal online ADS protocols for several more problems, including maximum matching and counting triangles in graphs, and matrix-vector multiplication. Chakrabarti et al. \cite{soda} gave optimized protocols for streams whose
length $\length$ is much smaller than the universe size $n$.

 
\subsection{Streaming Interactive Proofs}
Cormode et al. \cite{vldb} showed that several general protocols from the classical literature on interactive proofs can be simulated in the SIP model. 
In particular, this includes a
powerful, general-purpose protocol due to Goldwasser, Kalai, and Rothblum \cite{gkr} (henceforth, the GKR protocol). Given any problem 
computed by an arithmetic or Boolean circuit of polynomial size and polylogarithmic depth,
the GKR protocol requires only polylogarithmic space and communication while using polylogarithmic
rounds of verifier--prover interaction. This yields SIPs for exactly solving many basic streaming problems with
polylogarithmic space and communication costs, including \textsc{Frequency Moments}, \textsc{Frequent Items}, and \textsc{Graph Connectivity}.
Cormode et al. \cite{vldb} also gave optimized protocols for specific problems, including \textsc{Frequency Moments} (see the Detailed Example below).

Gur and Raz \cite{gur} gave an Arthur-Merlin streaming protocol for the \textsc{Distinct Elements} problem 
with communication cost $\tilde{O}(h)$ space cost $\tilde{O}(v)$ for any $h, v$ satisfying $h \cdot v \geq n$, where the $\tilde{O}$ notation hides factors that are polylogarithmic in $n$. Klauck and Prakash \cite{prakashnew}
extended this protocol to give an SIP for \textsc{Distinct Elements} with polylogarithmic space and communication costs and logarithmically
many rounds of prover--verifier interaction. 

Chakrabarti et al. \cite{suresh} gave \emph{constant-round} online SIPs with logarithmic space and communication costs for many problems, including \textsc{Index}, \textsc{Range-Counting}, and \textsc{Nearest-Neighbor Search}. These protocols are exponentially more efficient than what can be achieved by constant-round online SIPs in which $\cV$'s messages to the prover are independent of the input (such as is required in the Arthur-Merlin streaming model of \cite{gur}). 
For classical interactive proofs where the verifier is not restricted to be streaming, allowing $\cV$'s messages to depend on the input does not yield analogous efficiency improvements: Goldwasser and Sipser \cite{goldwassersipser} showed that any interactive proof can be simulated with a polynomial blowup in all costs by an interactive proof in which $\cV$'s messages to $\cP$ consist entirely of random coin tosses.  



\subsection{Computationally Sound Protocols}
\label{sec:cs}
Computationally sound protocols may achieve properties that are unattainable in the information-theoretic setting. For example, they
typically achieve reusability, allowing the verifier to use the same randomness to answer many queries. In
contrast, most SIPs only support ``one-shot'' queries, because they require the verifier to reveal secret
randomness to the prover over the course of the protocol.

Chung et al. \cite{chung} combined the GKR protocol with fully homomorphic encryption (FHE) to give reusable
two-message streaming delegation protocols with polylogarithmic space and communication costs for any problem in the complexity class \textsc{NC}. They
 also gave reusable four-message protocols with polylogarithimic space and communication costs for any problem in the complexity class \textsc{P}. 
 
 Papamanthou et al. \cite{eurocryptpaper} gave
improved streaming delegation protocols for a class of low-complexity queries including point queries and range search: these
protocols avoid the use of FHE, and allow the prover to answer such queries in polylogarithmic time. 
(In contrast, protocols based on the GKR protocol \cite{chung, vldb} require the prover to spend time 
quasilinear in the size of the data stream after receiving
a query, even if the answer itself can be computed in sublinear time.) The protocols of \cite{eurocryptpaper} are based on \emph{hash trees}.

\subsection{Models Allowing Linear Total Communication}
\label{sec:perupdate}
The literature on stream verification contains two models that depart from those described above in that they allow $\cP$ and $\cV$ to communicate on each stream update. In particular, they allow a linear amount of total communication, but aim to bound the maximum amount of communication exchanged (and processing time spent) during any stream update.

\medskip
\noindent \textbf{A Statistically Sound Model.} 
Klauck and Prakash \cite{prakash} studied a variant of online SIPs, in which $\cP$ and $\cV$ are allowed to exchange a constant number of messages during each stream update, as long as at most a constant number of machine words are exchanged during each update. They gave protocols in this model for \textsc{Median}, for determining if a matrix is full-rank, and for approximating the longest-increasing subsequence of the stream.

\medskip
\noindent \textbf{A Computationally Sound Model.} Schr{\"o}der and Schr{\"o}der \cite{ccs} introduced a model requiring only computational soundness, which they called the \emph{verifiable data streaming} (VDS) model. This model is targeted at settings in which there are many parties who may wish to verify answers returned by the prover. In the VDS model, $\cV$ is allowed to send a short message to $\cP$ on each stream update, with no communication allowed in the reverse direction. VDS protocols are required to be reusable and publicly verifiable, in the sense that any party who knows $\cV$'s public key can verify any response by the prover. The VDS model does not permit $\cV$ to update the public key on every stream update, as this would require coordination between all parties wishing to verify information returned by the prover. 

There is a trivial VDS protocol for the \textsc{Index} problem, requiring polylogarithmic communication on each stream update. Recall that in the \textsc{Index} problem, the data stream specifies all entries of a vector $\mathbf{x} = (x_1, \dots, x_n) \in \{0, 1\}^n$, followed by an index $i \in [n]$, and the goal to output $x_i$. In the trivial VDS protocol, for every stream update $x_j$, $\cV$ sends to $\cP$ a digital signature of the tuple $(j, x_j)$. After the stream observation phase, $\cP$ sends to $\cV$ the claimed value of $x_i$ along with a valid digital signature of the tuple $(i, x_i)$. The protocol is secure assuming the prover cannot forge valid signatures: under this assumption, the only way $\cP$ can efficiently produce a valid signature of the tuple $(i, x_i)$ is if $\cV$ had previously sent the signature to $\cP$.

Schr{\"o}der and Schr{\"o}der \cite{ccs} gave a VDS protocol that supports {\sc Index} queries in a more general setting in which the stream not only contains entries of the vector $\mathbf{x}$, but also contains \textsc{Update} operations. Here, an \textsc{Update} operation changes the value of a designated entry $x_j$ of $\mathbf{x}$ to a new value $x'_j$. However, in the protocol of \cite{ccs}, each $\textsc{Update}$ operation requires bidirectional communication (one message from $\cP$ to $\cV$, followed by one from $\cV$ to $\cP$), as well as an update to $\cV$'s public key. 

Krupp et al. \cite{krupp} gave related VDS protocols that reduces costs by logarithmic factors.

\subsection{Implementations}
Implementations of the GKR protocol were provided by Cormode et al. in \cite{itcs} and Thaler in \cite{crypto}. Cormode et al. \cite{itcs} also provided 
optimized implementations of several ADS protocols from \cite{icalp, esa}. Thaler et al. \cite{hotcloud} provided parallelized implementations using Graphics Processing Units. 

Qian \cite{sadimp} refined and implemented the streaming delegation protocol for point queries and range search of Papamanthou et al. \cite{eurocryptpaper}.

Implementations of VDS protocols are described by Schr{\"o}der and Simkin \cite{fc} and Krupp et al.  \cite{krupp}.

\section{Detailed Example}
As described in \cite{vldb}, the sum-check protocol can be directly applied to give an SIP for the $k$th frequency moment problem with $\log n$ rounds of prover--verifier iteration,
and $O(\log^2(n))$ space and communication costs.
The sum-check protocol is described in Figure \ref{fig}.

\begin{figure}
\fbox{
\small
\begin{varwidth}{15cm}
\noindent \textbf{Input:} $\cV$ is given oracle access to a $v$-variate polynomial $g$ over finite field $\mathbb{F}$ and an $H \in \mathbb{F}$.\\
\noindent \textbf{Goal:} Determine whether $H=\sum_{(x_1, \dots, x_v) \in \{0, 1\}^v} g(x_1, \dots, x_v)$. \\
\begin{itemize}
\item In the first round, $\cP$ computes the univariate polynomial $$g_1(X_1) := \sum_{x_2, \dots,x_v \in \{0, 1\}^{v-1}} g(X_1,x_2,\dots,x_v),$$ 
and sends $g_1$ to $\cV$. $\cV$ checks that $g_1$ is a univariate polynomial of degree at most $\deg_1(g)$, and that $H=g_1(0) + g_1(1)$,
rejecting if not.

\item $\cV$ chooses a random element $r_1\in \mathbb{F}$, and sends $r_1$ to $\cP$.

\item In the $j$th round, for $1 < j < v $, $\cP$ sends to $\cV$ the univariate polynomial
$$g_j(X_j) = \sum_{(x_{j+1}, \dots, x_{v}) \in \{0, 1\}^{v-j}}	g(r_1,\dots,r_{j-1}, X_j, x_{j+1}, \dots, x_v).$$ 
$\cV$ checks that $g_j$ is a univariate polynomial of degree at most $\deg_j(g)$,
and that $g_{j-1}(r_{j-1}) = g_j(0) + g_j(1)$, rejecting if not. 

\item $\cV$ chooses a random element $r_j \in \mathbb{F}$, and sends $r_j$ to $\cP$.

\item In round $v$, $\cP$ sends to $\cV$ the univariate polynomial
$$g_v(X_v) = g(r_1,\dots,r_{v-1},X_v).$$ $\cV$ checks that $g_v$ is a univariate polynomial of degree at most
$\deg_v(g)$, rejecting if not. 

\item $\cV$ chooses a random element $r_v \in \mathbb{F}$ and evaluates $g(r_1, \dots, r_v)$ with
a single oracle query to $g$. $\cV$ checks that
$g_v(r_v) = g(r_1, \dots, r_v)$, rejecting if not.

\item If $\cV$ has not yet rejected, $\cV$ halts and accepts.
\end{itemize}
\end{varwidth}
}
\caption{Description of the sum-check protocol. $\deg_i(g)$ denotes the degree of $g$ in the $i$th variable.} 
\label{fig}
\end{figure}

\medskip
\noindent \textbf{Properties and Costs of the Sum-check Protocol.}
The sum-check protocol satisfies perfect completeness, and has soundness error $\eps \leq v \cdot \deg(g)/|\mathbb{F}|$,
where $\deg(g) := \max_{i=1}^v \deg_i(g)$ denotes maximum degree of $g$ in any variable (see \cite{lfkn} for a proof). 
There is one round of prover--verifier interaction in the sum-check protocol for each
of the $v$ variables of $g$, and the total communication $O(v \cdot \deg(g))$ field elements.

Note that as described in Figure \ref{fig}, the sum-check protocol assumes that the verifier has oracle access to $g$. However, this will not be the case in applications,
as $g$ will ultimately be a polynomial that depends on the input data stream.

\medskip \noindent \textbf{The SIP for Frequency Moments.}
In the $k$th frequency moments problem, the goal is to output $\sum_{i \in \uni} f_i^k$, where $f_i$
is the number of times item $i$ appears in the data stream $\sigma$. 
For a vector $\mathbf{i} = (i_1, \dots, i_{\log n}) \in \{0, 1\}^{\log n}$, let 
$\chi_{\mathbf{i}}(x_1, \dots, x_{\log n})= \prod_{k=1}^{\log n}
\chi_{i_k}(x_k)$, where $\chi_0(x_k) = 1 -x_k$ and $\chi_1(x_k) = x_k$. $\chi_{\mathbf{i}}$ is the unique multilinear polynomial that maps 
$\mathbf{i} \in \{0, 1\}^{\log n}$
to $1$ and all other values in $\{0, 1\}^{\log n}$ to 0, and it is referred to
as the \emph{multilinear extension} of $\mathbf{i}$. 

For each $i \in \uni$, associate $i$ with a vector $\mathbf{i} \in \{0, 1\}^{\log n}$ in the natural way, and
let $\mathbb{F}$ be a finite field with $n^k \leq |\mathbb{F}| \leq 4 \cdot n^k$.
Define the polynomial $\hat{f}\colon \mathbb{F}^{\log n} \rightarrow \mathbb{F}$ via
\begin{equation}\label{eq} \hat{f} = \sum_{\mathbf{i} \in \{0,1\}^{\log n}} f_{i} \cdot \chi_{\mathbf{i}}. \end{equation}
Note that $\hat{f}$ is the unique multilinear polynomial satisfying the property that $\hat{f}(\mathbf{i}) = f_i$ for all
$\mathbf{i} \in \{0, 1\}^{\log n}$.

The $k$th frequency moment of $\sigma$ is equal to $\sum_{\mathbf{i} \in \{0, 1\}^{\log n}} f_i^k = \sum_{\mathbf{i} \in \{0, 1\}^{\log n}} (\hat{f}^k)(\mathbf{i})$.
Hence, in order to compute the $k$th frequency moment of $\sigma$, 
it suffices to apply the sum-check protocol to the polynomial $g=\hat{f}^k$. 
This requires $\log n$ rounds of prover--verifier interaction, and since the total degree of $\hat{f}^k$ is $k \cdot \log n$, 
the total communication cost is $O(k \log n)$ field elements.
which require $O(k^2 \log^2 n)$ total bits to specify. 

At the end of the sum-check protocol, $\cV$ must compute $g(r_1, \dots, r_{\log n}) = (\hat{f}^k)(r_1, \dots, r_{\log n})$ for randomly chosen $(r_1, \dots, r_{\log n}) \in \mathbb{F}^{\log n}$. It suffices for $\cV$ to evaluate $z:=\hat{f}(r_1, \dots, r_{\log n})$, since $(\hat{f}^k)(r_1, \dots, r_{\log n})=z^k$. The following lemma establishes
that $\cV$ can evaluate $z$ with a single pass over $\sigma$, while storing $O(\log n)$ field elements.

\begin{lemma}
$\cV$ can compute $z=\hat{f}(r_1, \dots, r_{\log n})$ with a single streaming pass over $\sigma$, while storing $O(\log n)$ field elements.
\end{lemma}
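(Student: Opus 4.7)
The plan is to exploit the explicit definition of $\hat{f}$ in equation (\ref{eq}) to express $\hat{f}(r_1, \dots, r_{\log n})$ as a sum of contributions, one per stream update, each of which is cheap to compute and incorporate incrementally.

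First I would rewrite $\hat{f}(r_1, \dots, r_{\log n}) = \sum_{\mathbf{i} \in \{0,1\}^{\log n}} f_i \cdot \chi_{\mathbf{i}}(r_1, \dots, r_{\log n})$, and then swap the order of summation by using the fact that $f_i = |\{j : a_j = i\}|$. This yields the identity
\[
\hat{f}(r_1, \dots, r_{\log n}) \;=\; \sum_{j=1}^{\length} \chi_{\mathbf{a}_j}(r_1, \dots, r_{\log n}),
\]
where $\mathbf{a}_j \in \{0,1\}^{\log n}$ is the binary encoding of the stream element $a_j$. In other words, $z$ decomposes into a sum of one term per stream update, exactly the structure a streaming algorithm can exploit.

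Next I would describe the algorithm: $\cV$ stores $r_1, \dots, r_{\log n}$ at the outset of the stream (they are fixed random field elements chosen once) and maintains a running accumulator $z \in \mathbb{F}$ initialized to $0$. On each stream update $a_j$, $\cV$ reads the binary expansion $\mathbf{a}_j = ((a_j)_1, \dots, (a_j)_{\log n})$ and computes
\[
\chi_{\mathbf{a}_j}(r_1, \dots, r_{\log n}) \;=\; \prod_{k=1}^{\log n} \chi_{(a_j)_k}(r_k),
\]
which by definition of $\chi_0, \chi_1$ is a product of $\log n$ field elements (each factor is either $r_k$ or $1-r_k$). $\cV$ then adds this product to $z$. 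For deletions (in the strict turnstile setting), $\cV$ instead subtracts the product, so the same procedure handles turnstile streams.

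For the cost analysis, $\cV$ stores the $\log n$ field elements $r_1, \dots, r_{\log n}$, the running accumulator $z$, and $O(1)$ additional scratch space per update, for a total of $O(\log n)$ field elements. Correctness follows directly from the rewritten formula, since after processing the entire stream the accumulator equals $\sum_{j=1}^\length \chi_{\mathbf{a}_j}(r_1,\dots,r_{\log n}) = \hat{f}(r_1, \dots, r_{\log n}) = z$. There is no real obstacle here: the only point worth flagging is the initial manipulation that interchanges the sum over universe items with the sum over stream positions, which is what makes the per-update update rule self-contained and independent of $n$.
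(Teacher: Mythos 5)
Your proposal is correct and follows essentially the same route as the paper's proof: rewrite $\hat{f}(r_1,\dots,r_{\log n})$ as $\sum_{j=1}^{\length}\chi_{\mathbf{a}_j}(r_1,\dots,r_{\log n})$ and maintain a running accumulator alongside the stored values $r_1,\dots,r_{\log n}$. Your additional remarks on the per-update product computation and on handling deletions by subtracting are fine elaborations but do not change the argument.
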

\begin{proof}
Given any stream update $a_j \in \uni$, let $\mathbf{a_j} \in \{0, 1\}^{\log n}$ denote the binary vector associated with $a_j$. 
It follows from Equation \eqref{eq} that $\hat{f}(r_1, \dots, r_{\log n}) = \sum_{j =1}^{\length} \chi_{\mathbf{a_j}}(r_1, \dots, r_{\log n})$.
Thus, $\cV$ can compute $\hat{f}(r_1, \dots, r_{\log n})$ incrementally from the raw stream by initializing $\hat{f}(r_1, \dots, r_{\log n}) \gets 0$, and processing each update $a_j$ via:
\[\hat{f}(r_1, \dots, r_{\log n}) \gets  \hat{f}(r_1, \dots, r_{\log n}) + \chi_{\mathbf{a_j}}(r_1, \dots, r_{\log n}).\]
$\cV$ only needs to store $(r_1, \dots, r_{\log n})$ and $\hat{f}(r_1, \dots, r_{\log n})$, which is $O(\log n)$ field elements in total.
\end{proof}

\nocite{aw, prakash, ccs,allspice, bestorder}
\section{Open Problems}
\begin{itemize}
\item Two-message online SIP protocols with logarithmic space and communication costs are known for several functions, including the \textsc{Index} function (cf. \cite{suresh}). It is also known that \emph{existing techniques} cannot yield 2- or 3-message online SIPs of polylogarithmic cost for the \textsc{Median} or \textsc{Frequency Moments} problems.
However, the following is open: exhibit an explicit function $F: \{0, 1\}^n \rightarrow \{0, 1\}$ that cannot be computed by any online two-message SIP with communication and space costs
both bounded above by $h$, for some $h = \omega(\log n)$. See \cite{suresh} for details.
Candidate functions satisfying this property include \textsc{Set-Disjointness} and \textsc{Inner-Product-Mod-2}.
\item For several functions $F \colon \{0, 1\}^n \rightarrow \{0, 1\}$, it is known that
any online ADS protocol for $F$ with communication
cost $h$ and space cost $v$ requires $h \cdot v = \Omega(n)$. This lower bound is tight
in many cases, such as for the \textsc{Index} function (cf. \cite{icalp}). However,
the following is open: exhibit an explicit function that cannot be computed by any online ADS protocol with communication and space costs
 both bounded above by $h$, for some $h=\omega(n^{1/2})$. 
 \item 
 Chakrabarti et al. \cite{soda} and Thaler \cite{semistreaming} have also identified explicit problems which are just as hard for online ADS protocols as they are in the standard data streaming model, in the sense that the sum of $\cV$'s space cost and the communication cost in any online ADS protocol must be at least as large as the space complexity of a standard streaming algorithm for the problem, up to constant or logarithmic factors. However, it is open to exhibit an explicit function $F$ that is just as hard in for \emph{prescient} ADS protocols as it is in the standard streaming model. 
 
 Formally, the following is open: exhibit an explicit function $F$ such that, for any constant $\delta > 0$, $F$ cannot be computed by any prescient ADS protocol with communication and space cost both bounded above by $O(\text{stream}(F)^{1-\delta})$. Here, $\text{stream}(F)$ denotes the minimum space complexity of any streaming algorithm that, for any input stream $\sigma$, outputs $F(\sigma)$ with probability at least $2/3$. Candidate functions include \textsc{Graph Connectivity} and \textsc{Graph Bipartiteness}. 
\end{itemize}

For all three problems, functions satisfying the relevant properties are known to exist by counting arguments. But as stated above, identifying explicit functions satisfying the properties remains open.



%
%


 \bibliographystyle{plain}
 \bibliography{mydatabase}
%


\end{document}